\let\csname equation*\endcsname\relax
\let\csname endequation*\endcsname\relax   
\newcommand{\beq}{\begin{equation}}
	\newcommand{\eeq}{\end{equation}}
\newcommand{\bqa}{\begin{eqnarray}}
	\newcommand{\eqa}{\end{eqnarray}}
\newcommand{\G}{\mathsf{M}}
\newcommand{\M}{\mathsf{M}}
\newcommand{\T}{\mathsf{T}}
\newcommand{\h}{\mathcal{H}}							%-------HILBERT SPACE H-----%
\newcommand{\obs}[1][2]{\bm{\mathbf{Obs}}({1},{2})} % - observable set; req. bm - %
\newcommand{\tra}[1]{\text{Tr}\left\{ #1\right\}}   %   trace functional   %
\newtheorem{corollary}{Corollary}%[section]
\newtheorem{theorem}{Theorem}%[section]
\newtheorem{prop}{Proposition}%[section]
\newtheorem{lemma}{Lemma}%[section]
\begin{document}
	\title{Multiple-shot labeling of quantum observables}
\author{Seyed Arash Ghoreishi}
\email{arash.ghoreishi@savba.sk}
\affiliation{Institute of Physics, Slovak Academy of Sciences, Dubravska cesta 9, Bratislava 845 11, Slovakia}
\author{Nidhin Sudarsanan Ragini}
%\altaffiliation{These authors contributed equally to this work}
%\email{nidhin.sudarsanan@outlook.com}
\affiliation{Institute of Physics, Slovak Academy of Sciences, Dubravska cesta 9, Bratislava 845 11, Slovakia}
\affiliation{Department of Computer Science, Virginia Tech, Blacksburg, VA 24061, USA}
\affiliation{Virginia Tech Center for Quantum Information Science and Engineering, Blacksburg, VA
24061, USA}

%\email{ziman.mio@gmail.com}
\author{Sk Sazim}
%\email{sk.sazimsq49@gmail.com}
\affiliation{Center for Theoretical Physics, Polish Academy of Sciences, Aleja Lotnik\'{o}w 32/46, 02-668 Warsaw, Poland}
\author{Mario Ziman}
\affiliation{Institute of Physics, Slovak Academy of Sciences, Dubravska cesta 9, Bratislava 845 11, Slovakia}
	
\begin{abstract}
Quantum labeling tasks ask one to recover the missing associations between classical outcome labels and the
effects forming the POVM. We study labeling in the multiple-shot regime, allowing a finite number of uses of the device and the most general
tester-based strategies, including adaptivity. For binary observables, we show that if perfect labeling is
impossible in a single shot, then it remains impossible with any finite
number of shots. In particular, we derive the formula for minimum-error performance and highlight its ``even-odd" behavior. For non-binary observables, we derive the optimal single-shot minimum-error success
probability in closed form, and show that entanglement assistance does not improve this optimum. We also
provide finite-shot schemes for (perfect or partial) labeling and give illustrative examples, including the
qubit trine POVM where the optimal two-shot success probability is computed explicitly.
	\end{abstract}
	
	\maketitle
\section{Introduction}
\noindent Quantum theory furnishes a prescription for describing the probabilistic and statistical aspects of experimental scenarios \cite{holevo2011probabilistic}. In particular, when we wish to predict the observed measurement statistics of an experiment, each outcome is associated with a positive operator, referred to as an effect \cite{heinosaari2011mathematical}. Although an effect does not itself describe the underlying physical mechanism, it determines the probability of occurrence of the associated outcome. To characterise the measurement statistics, we associate a set of effects, together with classical labels, to each mutually exclusive outcome. Moreover, while the effects corresponding to a measurement device can be experimentally verified, the choice of labels cannot be tested using the device alone.

There exist scenarios in which the mathematical description of the effects associated with a measurement device implementing an observable is known in advance, while their pairings with outcome labels are lost or unknown. Such situations may arise due to deficiencies on the part of the users of the device and require the identification of the missing associations, or ``labels,'' in order to correctly assign a mathematical description to each label. Since the performance of measurement is crucial for the successful development of quantum technologies \cite{heinosaari2023anticipative,cieslinski2024analysing,ghoreishi2025future,bozzio2025quantum}, a class of tasks designed to address this effect--outcome association problem was introduced in Ref.~\cite{sudarsanan2024single} and termed quantum labeling tasks. These tasks were identified as a particular instance of quantum distinguishability problems involving observables~\cite{sudarsanan2024single} %(see \autoref{subsection:SII.Labeling}). 
More broadly, quantum distinguishability tasks trace their origins to the seminal work of Helstrom in 1969~\cite{helstrom1969quantum}, where they were first studied in the context of quantum states~\cite{barnett2009quantum, bae2015quantum, bae2013structure, ghoreishi2019parametrization, rouhbakhsh2023geometric,Lue2026,Lue2026a,Ivanovic1987,Peres1988,Dieks1988,Jaeger1995}, and were later extended to quantum measurements and quantum channels~\cite{ziman2009unambiguous, ziman2010single, sedlak2014optimal, chiribella2008memory, bavaresco2021strict, ohst2024characterising}.

 Although Quantum state discrimination can be viewed as a special case of quantum network (or channel) discrimination, obtained by regarding states as preparation channels with trivial input, there are crucial differentiating features when finitely many copies or uses (hereafter referred to as ``shots") of the associated quantum devices are accessible to assist the distinguishability task. The following contrast can be observed. Quantum states that are not perfectly distinguishable in a single shot remain imperfectly distinguishable even when finitely many copies are available. In contrast, for quantum channels, it can happen that two channels are not perfectly distinguishable with a single use, yet become perfectly distinguishable with finitely many uses, provided one is allowed to employ suitable multi-use probing strategies (possibly involving entanglement and/or adaptivity). A well-known example is any pair of distinct unitary channels, which can be perfectly distinguished with some finite number of uses \cite{acin2001statistical,d2001using}. More generally, however, such a finite-use perfect discrimination is not guaranteed for arbitrary channels and occurs only under specific conditions. These observations motivate us to investigate potential performance improvements in labeling tasks in the multiple-shot regime.

In this paper, we extend the single-shot labeling framework of Ref.~\cite{sudarsanan2024single} to the multi-shot setting,
treating both binary and non-binary observables. We derive explicit expressions for the optimal
minimum-error performance and use them to quantify how access to multiple uses changes the
labeling power. For non-binary observables, we show that complete labeling cannot be achieved in
a single use, and we identify conditions under which perfect labeling becomes possible with finitely
many uses. Moreover, since measurement channels are entanglement-breaking, entanglement assistance does not yield an advantage: the relevant optimisations reduce to classical strategies based on choosing probe states (possibly adaptively) and classical post-processing.
In particular, we show that a $d$-dimensional observable with $n$ effects can be perfectly labeled in
$n-1$ shots (via a simple scheme) if and only if at least $n-1$ effects have an eigenvalue equal to $1$.

The paper is organised as follows. In Sec.~\ref{sec:formulation} we review quantum observables and quantum testers and
formulate labeling as a channel-discrimination problem. In Sec.~\ref{sec:3-binary} we analyse binary observables,
establishing impossibility results for finite-shot perfect labeling and characterising the optimal
minimum-error performance. In Sec.~\ref{sec:4non-binary} we turn
to non-binary observables: we characterise finite-shot perfect labeling, study minimum-error labeling,
and discuss multi-shot protocols, including a detailed trine-measurement case study. We conclude in
Sec.~\ref{sec:conclusion} with a summary and outlook.

%%%%%%%%%%%%%%%%%%%%%%%%%%%%%%%%%%%%%%%%%%%%%%%%%%%%%%%%%%%%%
%%%%%%%%%%%%%%%%%%%%%%%%%%%%%%%%%%%%%%%%%%%%%%%%%%%%%%%%%%%%%
\section{formulation of the problem} \label{sec:formulation}
%%%%%%%%%%%%%%%%%%%%%%%%%%%%%%%%%%%%%%%%%%%%%%%%%%%%%%%%%%%%%
%%%%%%%%%%%%%%%%%%%%%%%%%%%%%%%%%%%%%%%%%%%%%%%%%%%%%%%%%%%%%
\subsection{Quantum observables}
%%%%%%%%%%%%%%%%%%%%%%%%%%%%%%%%

\noindent An $n$-valued quantum measurement is characterised by its mutually exclusive outcomes $x_1, \dots, x_n$, which are respectively associated with effects, $M_1, \dots, M_n$; these positive operators are constrained to sum up to the identity operator, $M_1 + \dots + M_n = \mathbbm{1}$. Even though we can safely identify the measurements as collections of effects for most of the purposes  of theoretical study, they are rigorously identified with maps which assign effects to particular outcomes. If we denote $\Omega := \{x_1,\dots, x_n \}$ as the ordered set of {outcome labels} and $\mathcal{E}(\mathcal{H})$ as the set of all effects, then a quantum measurement is identified with a normalised positive operator-valued measure (POVM), which we refer to as an ``{observable}" $\mathsf{M}: \Omega \rightarrow \mathcal{E}(\mathcal{H})$ with $\mathsf{M}(x_k) = M_k$. Moreover, these observables can be modeled as quantum-to-classical measure-and-prepare channels $\mathcal{M}$ (hereafter referred to as ``{measurement channels}"). These channels take state $\varrho$ of the measured system as input and outputs state of an $n$-dimensional system, with orthonormal basis $\{ \ket{x_k} \}_{k=1}^n$,  as, 
\begin{equation}
    \mathcal{M}(\varrho) = \sum_{x_k \in \Omega} \text{Tr}\{\varrho M_k\} \dyad{x_k}.
    \label{m&p}
\end{equation}
Since we have the channel representation, as a result of the Choi-Jamiołkowski isomorphism \cite{choi1975completely,jamiolkowski1972linear} between completely positive maps and positive operators, we can associate an observable $\mathsf{M}$ to the positive operator $\Phi_{\mathsf{M}}$ given by,
\begin{equation}
    \Phi_{\mathsf{M}} = (\mathbbm{1}\otimes \mathcal{M})[\Psi_+] = \sum_{k} M_k^{\top} \otimes \dyad{x_k},
\end{equation}
where $\Psi_+ = \dyad{\psi_+}$ is the unnormalised maximally entangled state, with $\ket{\psi_+} := \sum_{j=1}^d \ket{jj} \in \h \otimes \h$ and $\{  \ket{1}, \dots, \ket{d} \}$ being an orthonormal basis of $\h$. \par 

We study discrimination scenarios when multiple queries or copies of each of the measurement channels involved are available. These multiple copies, which are generally implemented at different points in time, can be considered as a single multiple-time-step quantum process. Such processes are, in a holistic fashion, mathematically described by ``{quantum combs}", which are nothing but Choi-Jamiołkowski operators of these processes satisfying certain causality constraints \cite{chiribella2008quantum, chiribella2009theoretical}. Instead of digressing towards discussing general quantum combs and their properties, we restrict ourselves to stating that the comb for an $n$-copy measurement channel $\mathcal{M}$ is given by the Choi-Jamiołkowski operator of the comb as $\Phi_{\mathsf{M}}^{\otimes n}$.  Interested readers can visit Ref.\cite{chiribella2009theoretical} for general discussions on combs. 

\subsection{Quantum testers}
\noindent The most general experiments or test procedures ``measuring" quantum combs are described by single descriptors referred to as ``{quantum testers}" \cite{chiribella2008quantum, ziman2008process, chiribella2009theoretical}. Within the premise of quantum distinguishability tasks, the generality of testers accommodates for the so-called ``{adaptive}" strategies as well as entanglement-assisted strategies. A quantum tester $\T$ measuring a quantum comb with $n$ time-steps is characterised by an incomplete network with $n$ open slots (see Fig. \ref{fig:testers}). As such, we refer such a $\T$ as an $n$-slot tester, oftentimes denoting it as $\T^{(n)}$. The input systems of the tester are labeled by even numbers, starting with 0 and the output systems with odd numbers (See (c) of \autoref{fig:testers}). 

Given a set of outcomes $\Delta = \{1, \dots, m \}$ associated with the measurement of an $n$ time-step process $\mathcal{C}$, the corresponding $n$-slot tester $\T^{(n)} $ is a collection of positive operators referred to as ``process effects" $\{ T^{(n)}_1, \dots, T^{(n)}_m \}$, which satisfies the following normalisation conditions, 
\begin{eqnarray}
        \sum_{x \in \Delta} T^{(n)}_x &=&  \Xi^{(n)} \otimes \mathbbm{1}_{2n-1},  \\
        \text{Tr}_{2k-2}\{\Xi^{(k)}\} &=& \Xi^{(k-1)} \otimes \mathbbm{1}_{2k-3}, \quad \forall k \in [2,n],  \\
        \text{Tr}\{\Xi^{(1)}\} &=& 1.
\end{eqnarray}
These normalisation conditions reflect the causal structure of the tester network. The probability of observing an event $\alpha \in \Delta$ is given by the generalised Born rule $p_{\alpha} =  \tra{\Phi_{\mathcal{C}}T_{\alpha}}$, where $\Phi_{\mathcal{C}}$ is the quantum comb of the process. Moreover, we observe from these conditions that a $1$-slot tester $\T^{(1)}$ has the normalisation condition $\sum_x T^{(1)}_x =  \xi \otimes \mathbbm{1}_1 $ for some appropriate state $\xi $  on the input system of the channel \cite{ziman2008process}. We note that such 1-slot testers describe measurements of 1-time step quantum processes. Such testers which test measurement channels can be assumed, without loss of generality, to take the form $T_{\alpha} = \sum_{k} H_k^{(\alpha)} \otimes \dyad{x_k}$ , where $\sum_{\alpha} H_k^{(\alpha)} = \xi$, for all $k$ \cite{sedlak2014optimal}. 

\begin{figure*}
    \centering
    \includegraphics[scale=0.2]{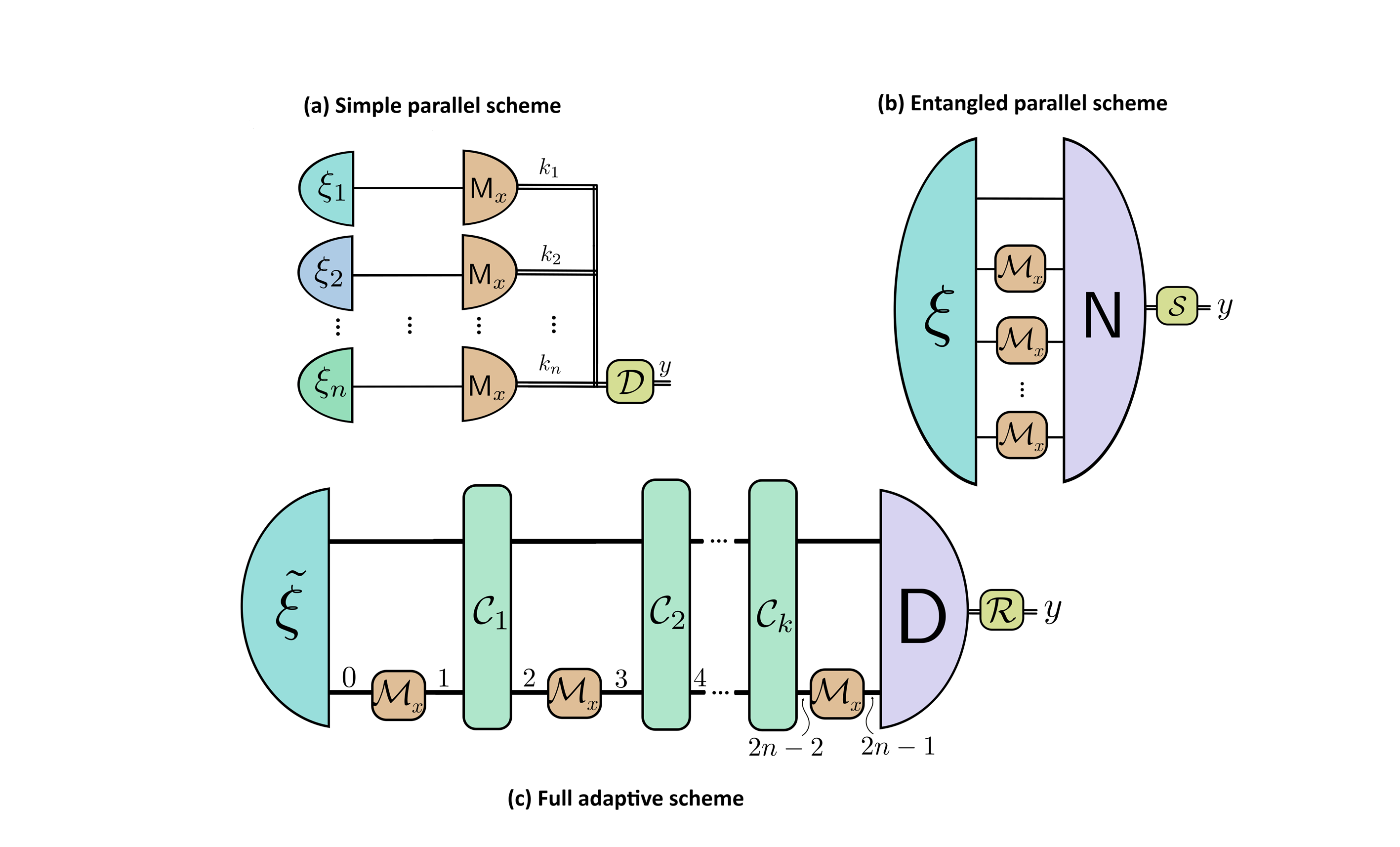}
    \caption{The figure depicts different discrimination tasks using quantum testers.}
    \label{fig:testers}
\end{figure*}
\subsection{Labeling problem}
\label{subsection: SII.Labeling}

\noindent As discussed in Ref. \cite{sudarsanan2024single}, the aspect of labeling an observable arises when there is a lack of information on the mappings between the outcomes and the associated effects, $x_k\mapsto M_k$. For the case where all the effects are different, of an observable with $n$ effects, there are $n!$ different possibilities, due to permutations, of how the outcomes can be paired with the effects. As such, the labeling problem aims to discriminate among these $n!$ possibilities. Once we fix an order, or equivalently the effect-outcome pairings, $\M_{\pi}$ is the observable obtained by $\pi$-permuting the order.

Within our premise of measurement channels, labeling corresponds to discriminating among channels, whose actions are given by, 
\begin{align}
    \mathcal{C}_{\pi}(X) = \sum_k \text{Tr}\{X M_k\} \dyad{x_{\pi(k)}} = \mathcal{P}_{\pi} \circ \mathcal{C}_{\text{id}}(X),
\end{align} 
where $\mathcal{P}_{\pi}(Y) = P_{\pi}Y P_{\pi}$ with $P_{\pi} = \sum_k \dyad{x_{\pi(k)}}{x_k}$ being the permutation operators. Notice that $\mathcal{C}_{\text{id}}$ denote no permutation. We can also observe that this symmetry is translated to the corresponding Choi operators $\Phi_{\pi}$. With $m = n!$, let $T_1, \dots, T_m$ denote the tester describing the labeling experiment associated with $\mathcal{C}_1, \dots, \mathcal{C}_m$. Then, this experiment is characterised by conditional probabilities $p(\pi| \pi') = \text{Tr}\{\Phi_{\pi'} T_{\pi}\}$. The most common performance quantifier of this task is the average error probability, 
\begin{align}
    p_e = 1 -\frac{1}{n!} \sum_{\pi} p(\pi | \pi),
\end{align}
or equivalently, probability of success $p_{s}=1-p_{e}$, where $p(\pi|\pi)$ corresponds to correct decisions. Minimising $p_e$, we arrive at \textit{minimum-error labeling}. Admitting an additional inconclusive outcome, labeled ``?", through $T_?$ and strictly requiring not to admit any error, that is $p(\pi|\pi') = 0 $ for all $\pi \neq \pi'$, we arrive at \textit{unambiguous labeling}. Here, when we enjoy error-free conclusions, it is at a cost of decision failures which occurs with a probability 
\begin{align}
    p_f = \frac{1}{n!} \tra{\sum_{\pi} \Phi_{\pi} T_?}.
\end{align}
Whenever $p_e = p_f =0$ is achieved, we have ``{perfect labeling}".

In multiple-shot labeling experiments, we have access to finite $n$ queries or uses of the measurement channel associated to the unlabeled measurement device. Such experiments are  characterised by conditional probabilities $p(\pi|\pi') = \tra{\Phi_{\pi'}^{\otimes n} T_{\pi}}$. Moreover, we can classify possible strategies as either ``simple parallel", ``entangled parallel tester", or `` full adaptive" schemes (see \autoref{fig:testers}). We note that these strategies are not strictly exclusive in specific instances. Simple parallel schemes do not rely on entanglement: We arrive at conclusions by measuring the unlabeled measurement device on appropriately chosen probe states, then processing the outcomes and deciding. The entangled parallel tester schemes are entanglement-assisted, and the different implementations of the unlabeled observable are not correlated to each other by quantum memories. The full adaptive schemes are not only assisted by entanglement but also by quantum memories and adaptivity. Thus, rendering each implementation of the measurement channel to be correlated to the previous ones in time.

%%%%%%%%%%%%%%%%%%%%%%%%%%%%%%%%%%%%%%%%%%%
\section{Binary observables}
\label{sec:3-binary}
%%%%%%%%%%%%%%%%%%%%%%%%%%%%%%%%%%%%%%%%%%%

\noindent We refer to quantum observables composed of two non-zero effects as binary observables. Such an observable, which corresponds to two events, can always be seen as describing some yes-or-no question associated to the measurement \cite{heinosaari2011mathematical}. Labeling an unlabeled binary measurement device corresponds to distinguishing between two binary observables. Suppose these two observables are $\mathsf{M}$, with effects $M_1$ and $M_2$, and $\mathsf{N}$, with effects $N_1$ and $N_2$. Then, from the prior information associated to our labeling problem, we also know that $N_1 = M_2$ and $N_2 = M_1$. Similarly, we denote their corresponding Choi operators as $\Phi_{\mathsf{M}}$ and $\Phi_{\mathsf{N}}$.

\subsection{Perfect labeling}

\noindent Keeping aside the pathological cases of both effects being identical to each other, the following result was found in Ref. \cite{sudarsanan2024single}:  In the single-shot regime, a binary observable can be perfectly labeled if and only if at least one of the two composing effects is not full-rank. Consequently, a natural as well as relevant question to be asked in the multiple-shot regime is: Given a binary observable $\mathsf{M}$ that is not perfectly labelable in a single shot, \text{i.e.,} its effects $M_1$ and $M_2$ are both full rank operators, is it possible to perfectly label it using finite $n$ uses of the unlabeled measurement device, especially when most general strategies are accessible? Investigations into this lead to the following theorem.

\begin{theorem}
    If a binary observable does not admit perfect labeling in a single shot, then it does not admit perfect labeling in any finite number of shots.
\end{theorem}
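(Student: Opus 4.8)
The plan is to show that perfect labeling of the binary observable $\mathsf{M}$ via $n$ uses of the measurement channel reduces to perfectly discriminating the two Choi operators $\Phi_{\mathsf{M}}^{\otimes n}$ and $\Phi_{\mathsf{N}}^{\otimes n}$ by a suitable $n$-slot tester, and that this is possible only if the two combs have orthogonal supports. So the first step is to recall that perfect discrimination of two quantum combs $\Phi_{\mathsf{M}}^{\otimes n}$, $\Phi_{\mathsf{N}}^{\otimes n}$ by some tester $\T^{(n)}$ is equivalent to the existence of a tester element $\T_1$ with $\Tr[\Phi_{\mathsf{M}}^{\otimes n}\T_1]=1$ and $\Tr[\Phi_{\mathsf{N}}^{\otimes n}\T_1]=0$; since all operators involved are positive and $\T_1 \le \xi^{(n)}\otimes\id$, this forces the support of $\Phi_{\mathsf{N}}^{\otimes n}$ to be contained in the kernel of $\T_1$, hence orthogonal to the part of $\Phi_{\mathsf{M}}^{\otimes n}$ that $\T_1$ ``sees''. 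The cleanest route is therefore to prove the contrapositive-flavoured statement: perfect labeling in $n$ shots is possible if and only if $\mathrm{supp}(\Phi_{\mathsf{M}}^{\otimes n}) \perp \mathrm{supp}(\Phi_{\mathsf{N}}^{\otimes n})$ is \emph{not} required, but rather that a \emph{tester} (not a global measurement) achieving the separation exists; and then to argue the tester's causal constraints cannot help here because the two combs are tensor powers of single-step objects sharing the same input marginal.

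Second, I would reduce the comb-orthogonality condition to a statement purely about the effects $M_1, M_2$. Writing $\Phi_{\mathsf{M}} = M_1^{\top}\otimes\dyad{x_1} + M_2^{\top}\otimes\dyad{x_2}$ and $\Phi_{\mathsf{N}} = M_2^{\top}\otimes\dyad{x_1} + M_1^{\top}\otimes\dyad{x_2}$, the supports of $\Phi_{\mathsf{M}}$ and $\Phi_{\mathsf{N}}$ on $\h\otimes\mathbb{C}^2$ are $(\mathrm{supp}\,M_1^{\top}\otimes\ket{x_1}) \oplus (\mathrm{supp}\,M_2^{\top}\otimes\ket{x_2})$ and $(\mathrm{supp}\,M_2^{\top}\otimes\ket{x_1}) \oplus (\mathrm{supp}\,M_1^{\top}\otimes\ket{x_2})$. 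When both $M_1, M_2$ are full rank, $\mathrm{supp}\,M_1^{\top} = \mathrm{supp}\,M_2^{\top} = \h$, so $\mathrm{supp}\,\Phi_{\mathsf{M}} = \mathrm{supp}\,\Phi_{\mathsf{N}} = \h\otimes\mathbb{C}^2$, i.e. the two single-shot combs have full, hence identical, support. Taking tensor powers preserves this: $\mathrm{supp}(\Phi_{\mathsf{M}}^{\otimes n}) = (\h\otimes\mathbb{C}^2)^{\otimes n} = \mathrm{supp}(\Phi_{\mathsf{N}}^{\otimes n})$ for every $n$. Thus the two $n$-step combs have the same (full) support for all $n$, and no measurement whatsoever — let alone a causally constrained tester — can distinguish operators with identical support without error.

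Third, I would make the ``no error-free discrimination of equal-support operators'' step rigorous in the tester language: if $\Tr[\Phi_{\mathsf{N}}^{\otimes n}\T_1]=0$ with $\T_1\ge 0$ and $\Phi_{\mathsf{N}}^{\otimes n}>0$ (strictly positive, being full rank), then $\T_1 = 0$, so $\Tr[\Phi_{\mathsf{M}}^{\otimes n}\T_1]=0\ne 1$; symmetrically for $\T_2$. Hence the no-error conditions $p(\pi|\pi')=0$ for $\pi\ne\pi'$ force both conclusive tester elements to vanish, leaving only $\T_?$, so the failure probability is $1$ and perfect labeling fails. Combined with Theorem stating a binary observable is perfectly labelable in one shot iff one effect is not full rank, this establishes that full-rankness of both effects — the hypothesis ``does not admit perfect labeling in a single shot'' — propagates to every finite $n$.

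The main obstacle I anticipate is justifying that the \emph{adaptive, memory-assisted} structure of a general $n$-slot tester genuinely cannot circumvent the support obstruction — one must be careful that intermediate processing does not, for instance, ``purify'' or ``concentrate'' the discriminating information the way it famously does for unitary channels. The resolution is precisely that measurement channels are entanglement-breaking and their Choi operators factor as $\sum_k M_k^\top\otimes\dyad{x_k}$ with a \emph{classical} output register; the tensor-power comb therefore has a block-diagonal structure in the classical registers, and within each block the quantum part is just a product of full-rank effect-supports, so the equal-support property is robust under any intermediate quantum operation the tester inserts. I would state this as the key lemma and spend the bulk of the argument there, since the rest is the short positivity argument above. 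An alternative, perhaps slicker, phrasing avoids combs entirely: any $n$-shot strategy yields, for input probe $\rho$ fed (adaptively) into the channels, a family of output probability distributions over $\{x_1,x_2\}^n$; perfect labeling requires the distribution under $\mathsf{M}$ and under $\mathsf{N}$ to have disjoint supports, but since every conditional $\Tr[\rho' M_i]$ is strictly positive for full-rank $M_i$ and any state $\rho'$, every outcome string has strictly positive probability under both hypotheses at every stage — so the supports coincide and disjointness is impossible. I would present the comb version as the main proof and mention this probabilistic reformulation as intuition.
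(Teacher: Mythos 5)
Your proof is correct, but it takes a genuinely different route from the paper's. The paper argues directly on the tester normalisation hierarchy: it writes the two-slot no-error condition $\sum_{j,x,y}(M_j^{\top}\otimes M_y^{\top}\otimes\dyad{j})\,\Xi^{(2)}\,(N_x^{\top}\otimes N_y^{\top}\otimes\dyad{x})=O$, sandwiches with the inverses $(M_y^{\top})^{-1}$ and $(N_y^{\top})^{-1}$ (available precisely because both effects are full rank), traces out one system using $\mathrm{Tr}_2[\Xi^{(2)}]=\xi\otimes\id$, and thereby collapses the two-shot condition to the single-shot one $M_x^{\top}\xi N_x^{\top}=O$, forcing $\xi=O$ and contradicting normalisation; the extension to $N$ shots is then asserted by repeating the reduction. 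You instead observe that full-rank $M_1,M_2$ make $\Phi_{\mathsf{M}}$ and $\Phi_{\mathsf{N}}$ block-diagonal with positive-definite blocks, hence strictly positive, so $\Phi_{\mathsf{N}}^{\otimes n}>0$ for every $n$, and then use the elementary fact that $\tra{AB}=0$ with $A>0$, $B\ge 0$ forces $B=O$: any conclusive tester element annihilating one comb must vanish, so it cannot assign probability one to the other, and perfect labeling fails. Your argument is shorter, treats all finite $n$ and all adaptive, memory-assisted testers uniformly (positivity of the tester elements is the only property used, so no separate extension step is needed), and in fact generalises immediately to non-binary observables with all effects full rank; what it does not exhibit, and the paper's computation does, is the structural fact that the multi-shot no-error constraint literally reduces to the single-shot constraint on $\xi$, which is more informative about how the tester hierarchy interacts with measure-and-prepare combs. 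The only blemishes are presentational: the opening paragraph's ``contrapositive-flavoured iff'' sentence is garbled (you only need, and only use, the necessity direction in the degenerate full-support case), and the anticipated ``key lemma'' about block-diagonal robustness under intermediate operations is redundant once the strict-positivity argument is in place.
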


The theorem says that, regardless of whether we implement parallel strategies or full adaptive strategies, perfect labeling cannot be achieved for the given observables. Consequently, the theorem and its proof can essentially be broken down into the following two propositions and their proofs, respectively.

\begin{prop}
    If a binary observable does not admit perfect labeling in a single shot, then it does not admit perfect labeling in any finite number of shots, when implemented in parallel.
\end{prop}
\begin{proof}
    Let us prove the scenario for two shots of the observable. We can perfectly label the measurement device with parallel implementation of two slots if there exists a  1-slot quantum tester $\mathsf{T} \equiv \{ {T}_{\mathsf{M}}, {T}_{\mathsf{N}}\}$, capable of perfectly distinguishing between the observables $\mathsf{M}$ and $\mathsf{N}$. With the normalisation ${T}_{\mathsf{M}}+{T}_{\mathsf{N}} = \xi \otimes \mathbbm{1}$, for some state $\xi$, the perfect labelability translates to satisfying the condition \cite{chiribella2008memory},
    \begin{equation}
        \Phi_{\mathsf{M}} \otimes \Phi_{\mathsf{M}} (\xi \otimes \mathbbm{1})\Phi_{\mathsf{N}} \otimes \Phi_{\mathsf{N}}=O.
    \end{equation}
    Here, $\Phi_{\mathsf{X}}$ corresponds to Choi-Jamio\l kowski operators of measurement channel of the observable $\mathsf{X}$ and $O$ to the zero operator. Plugging in the expressions for the Choi-Jamio\l kowski operators, the above equation is translated into the following equation, those should be satisfied simultaneously for all $x, y = 1,2$,
    \begin{equation}
        (M_x^{\top} \otimes M_y^{\top}) \xi (N_x^{\top} \otimes N_y^{\top}) = O.
    \end{equation}
    From our assumption that all the involved effects are full rank, their inverses exist. This enables us to sandwich the above equation, left with $(M_x^{\top})^{-1} \otimes (M_y^{\top})^{-1}$ and right with $(N_x^{\top})^{-1} \otimes (N_y^{\top})^{-1}$, leaving us with $\xi = O$. This implies that there does not exist a valid state $\xi$ and consequently a tester $\mathsf{T}$ that would achieve our desired task. This completes the proof for the case of two shots. The same reasoning holds for any finite number of shots.
\end{proof}

\begin{prop}
    If a binary observable does not admit perfect labeling in a single shot, then it does not admit perfect labeling in any finite number of shots, when implemented using adaptive strategies.
\end{prop}
\begin{proof}
    Let us prove the scenario for two shots of the observable. In the case of labeling with the more general adaptive strategies, perfect labeling of $\mathsf{M}$ translates to the existence of binary 2-slot testers $\mathsf{T}^{(2)} \equiv \{ {T}^{(2)}_{\mathsf{M}}, {T}^{(2)}_{\mathsf{N}}\}$, with normalisation  conditions, ${T}^{(2)}_{\mathsf{M}}+{T}^{(2)}_{\mathsf{N}} = \Xi^{(2)} \otimes \mathbbm{1}_3$, $\text{Tr}\{ \Xi^{(2)}\} = \xi \otimes \mathbbm{1}_1$, satisfying the condition
    \begin{equation}
        \Phi_{\mathsf{M}} \otimes \Phi_{\mathsf{M}} (\Xi^{(2)} \otimes \mathbbm{1}_3)\Phi_{\mathsf{N}} \otimes \Phi_{\mathsf{N}}=O,
    \end{equation}
    where $O$ is the null-operator and $\Phi_{\mathsf{M}/\mathsf{N}}$ is the Choi matrix for the measurement channel induced by observable $\mathsf{M}/\mathsf{N}$ \cite{chiribella2008memory}.
    Plugging in the expressions for the Choi-Jamio{\l}kowski operators, we arrive at the equation 
    \begin{equation}
        \sum_{j,k} (M_i^{\top} \otimes M_j^{\top} \otimes\dyad{j}) \Xi^{(2)} (N_i^{\top} \otimes N_k^{\top} \otimes\dyad{k})=O,
    \end{equation}
    
    \noindent which should be satisfied for all $i = 1,2$. Now, sandwiching this equation, with $\mathbbm{1} \otimes (M_i^{\top} )^{-1} \otimes \mathbbm{1}$ from the left and with $\mathbbm{1} \otimes (N_i^{\top} )^{-1} \otimes \mathbbm{1}$ from right and tracing out the system $\h_2$ from this equation, we are left with
    \begin{align}
       0&= \sum_{j,k} (M_j^{\top} \otimes \dyad{j})(\text{Tr}_2\{\Xi^{(2)}\})(N_k^{\top} \otimes \dyad{k}) \nonumber \\
        &=\sum_{j,k} (M_j^{\top} \otimes \dyad{j})(\xi \otimes \mathbbm{1})(N_k^{\top} \otimes \dyad{k}) \nonumber \\
        &=\sum_j (M_j^{\top} \xi N_j^{\top}) \otimes \dyad{j}.
    \end{align}    
    As a result, the above equations of perfect labelability translate to $M_j^{\top} \xi N_j^{\top} = O$ for $\forall j$. Using the same arguments from the previous proof, we can conclude that $\xi$ has to be the null operator, which means a contradiction. Consequently, there does not exist a 2-slot tester that describes full adaptive scheme capable of achieving perfect labelability.The same reasoning holds for any finite $N$ shots, proving the impossibility of perfect labeling for the most general full adaptive scheme.
\end{proof}

\subsection{Minimum-error labeling}
 Since we have seen that binary observables that cannot be labeled in a single shot can neither be labeled in any finite shots, we proceed to evaluate the optimal averaged error probability for labeling, given such observables. This is carried out to compare the error probability obtained with multiple uses with that obtained with a single use, and in general, to see whether the average performance of labeling experiments increases with the number of uses.
 
Let us investigate this scenario for the case of two uses. The averaged error probability that is associated with each decision is given by

\begin{align}
    &p_e = \frac{1}{2} \left\{  \tra{T_{\mathsf{M}} \Phi^{\otimes2}_{\mathsf{N}} } + \tra{T_{\mathsf{N}} \Phi^{\otimes2}_{\mathsf{M}} }  \right\}, 
    \label{eq:binaryminimum}
\end{align}

%\noindent Evaluating the second term in the above equation, one can show that $\tra{( \Xi^{(2)} \otimes \id_3)(\Phi_{\mathsf{M}}^{\otimes2})} = 1$. 
Using the tester normalization, we can arrive at the following equation

\begin{align}
p_e
&=\frac12\Big(\Tr[T_{\mathsf{M}}\,\Phi_{\mathsf{N}}^{\otimes 2}]
+\Tr[(\Xi^{(2)}\otimes\mathbbm{1}-T_{\mathsf{M}})\,\Phi_{\mathsf{M}}^{\otimes 2}]\Big)\nonumber\\
&=\frac12\Big(\Tr[(\Xi^{(2)}\otimes\mathbbm{1}_{\mathrm{out}})\,\Phi_{\mathsf M}^{\otimes 2}]
+\Tr[T_{\mathsf M}(\Phi_{\mathsf N}^{\otimes 2}-\Phi_{\mathsf M}^{\otimes 2})]\Big).
\end{align}
Since $\Xi^{(2)}$ is a valid two-slot tester normalisation, one has
$\Tr[(\Xi^{(2)}\otimes\mathbbm{1}_{\mathrm{out}})\,\Phi_{\mathsf M}^{\otimes 2}]=1$,
and the optimal strategy can be found by minimizing the error over all testers, which can be written as 
\begin{equation}\label{eq:pe-helstrom-comb}
p_e=\frac12\left(1+\min_{0\le T_{\mathsf M}\le \Xi^{(2)}\otimes\mathbbm{1}}
\Tr\!\left[T_{\mathsf M}\big(\Phi_{\mathsf{N}}^{\otimes 2}-\Phi_{\mathsf{M}}^{\otimes 2}\big)\right]\right).
\end{equation}

For the binary labeling problem 
\begin{equation}\label{eq:swap}
\mathsf{N}_1=\mathsf{M}_2,\qquad \mathsf{N}_2=\mathsf{M}_1,
\end{equation}
So that, for any input state $\rho$, if we let $\Pr_{\mathsf M}(1\mid\rho)=\Tr[\rho,\mathsf M_1]$ denote the probability of obtaining outcome $1$ when the probe state $\rho$ is measured by $\mathsf M$, then the single-use outcome probabilities satisfy
$\Pr_{\mathsf N}(1|\rho)=1-\Pr_{\mathsf M}(1|\rho)$.
Because the channel outputs are classical, any adaptive two-slot tester is equivalent to
a sequential strategy: choose a first input $\rho$, observe the first label $i\in\{1,2\}$,
then choose the second input $\rho_i$ and observe the second label $j\in\{1,2\}$.
Let
\begin{eqnarray}
p:&=\Pr_{\mathsf{M}}(1|\rho)=\Tr[\rho\,\mathsf M_1],\\
q_1:&=\Pr_{\mathsf{M}}(1|\rho_1)=\Tr[\rho_1\,\mathsf M_1],\\
q_2:&=\Pr_{\mathsf{M}}(1|\rho_2)=\Tr[\rho_2\,\mathsf M_1].
\end{eqnarray}
Then the joint distributions under hypotheses $\mathsf{M}$ and $\mathsf{N}$ are
\begin{align}
&P_{\mathsf M}(1,1)=p q_1,\quad P_{\mathsf M}(1,2)=p(1-q_1),\\
&P_{\mathsf M}(2,1)=(1-p)q_2,\quad P_{\mathsf M}(2,2)=(1-p)(1-q_2), \nonumber\\
&P_{\mathsf N}(1,1)=(1-p)(1-q_1),\quad P_{\mathsf N}(1,2)=(1-p)q_1, \nonumber \\
&P_{\mathsf N}(2,1)=p(1-q_2),\quad P_{\mathsf N}(2,2)=p q_2, \nonumber
\end{align}
After fixing a two-shot strategy -- i.e., the first input state $\rho$ and the conditional second input states
$\rho_1,\rho_2$ --the experiment produces a classical result $y=(i,j)\in\{1,2\}^2$.
Under the two hypotheses $\mathsf M$ and $\mathsf N$ this outcome is distributed according to
$P_{\mathsf M}(i,j)$ and $P_{\mathsf N}(i,j)$, respectively.
For equal priors $\Pr(\mathsf M)=\Pr(\mathsf N)=\tfrac12$, the optimal decision rule is the
maximum-a-posteriori rule, which here reduces to comparing likelihoods:
upon observing $(i,j)$ one guesses $\mathsf M$ if $P_{\mathsf M}(i,j)\ge P_{\mathsf N}(i,j)$
(and $\mathsf N$ otherwise).
Hence, conditioned on observing $(i,j)$, the probability of error is the smaller of the two
weighted likelihoods, i.e.\ $\min\{\tfrac12 P_{\mathsf M}(i,j),\tfrac12 P_{\mathsf N}(i,j)\}
$.
Summing over all outcomes gives the average minimum-error probability
\begin{equation}\label{eq:pe-classical}
p_e(p,q_1,q_2)=\frac12\sum_{i,j\in\{1,2\}}\min\{P_{\mathsf M}(i,j),P_{\mathsf N}(i,j)\}.
\end{equation}

%For equal priors, the optimal error for a given $(p,q_1,q_2)$ is
%\begin{equation}\label{eq:pe-classical}
%p_e(p,q_1,q_2)=\frac12\sum_{i,j\in\{1,2\}}\min\{P_{\mathsf M}(i,j),P_{\mathsf N}(i,j)\}.
%\end{equation}
If we choose the same input state in the second use, i.e. $q_1=q_2=p$, we have 
\begin{eqnarray}
\sum_{j}\min\{P_{\mathsf M}(1,j),P_{\mathsf N}(1,j)\}=\min\{p,1-p\},\\
\sum_{j}\min\{P_{\mathsf M}(2,j),P_{\mathsf N}(2,j)\}=\min\{p,1-p\},
\end{eqnarray}
as a result
\begin{equation}\label{eq:pe-two-use}
p_e(p,q_1{=}p,q_2{=}p)=\min\{p,1-p\}.
\end{equation}
Optimising over the first input state $\rho$ is then equivalent to optimising $p=\Tr[\rho\,\mathsf{M_1}]$ over its achievable interval
$[\lambda_{\min}(\mathsf M_1),\lambda_{\max}(\mathsf M_1)]$ ($\lambda_{\min}(\mathsf M_1)$ and $\lambda_{\max}(\mathsf M_1)$ are the eigenvalues of $M_1$) and it gives 
\begin{align}
p_e^\star
&=\min_{\rho}\min\{\Tr[\rho\,\mathsf M_1],1-\Tr[\rho\,\mathsf M_1]\}\nonumber\\
&=\frac12\left(1-\max_{\rho}\big|2\Tr[\rho\,\mathsf M_1]-1\big|\right)\nonumber\\
&=\frac12\left(1-\|2\mathsf M_1-\mathbbm{1}\|_\infty\right)
=\frac12\left(1-\|\mathsf M_1-\mathsf M_2\|_\infty\right).
\end{align}

\noindent We now show that the value above is optimal among all two-use adaptive strategies
i.e., over all admissible triples $(p,q_1,q_2)$.

\noindent For equal priors, the Bayes optimal error can be written as
\begin{eqnarray}\label{eq:pe-TV-form}
p_e(p,q_1,q_2)&=&\frac12\sum_{i,j}\min\{P_{\mathsf M}(i,j),P_{\mathsf N}(i,j)\}\\
&=&\frac12\Big(1-\mathrm{TV}(P_{\mathsf M},P_{\mathsf N})\Big),
\end{eqnarray}
where $\mathrm{TV}(P_{\mathsf M},P_{\mathsf N}):=\tfrac12\sum_{i,j}|P_{\mathsf M}(i,j)-P_{\mathsf N}(i,j)|$
is the total variation distance between the two classical distributions.

Using the expressions for $P_{\mathsf M}(i,j)$ and $P_{\mathsf N}(i,j)$ in terms of $(p,q_1,q_2)$,
 we find
\begin{align}
&\sum_{i,j}|P_{\mathsf M}(i,j)-P_{\mathsf N}(i,j)|= \nonumber \\
&\Big(|q_1-(1-p)|+|q_1-p|\Big)+\Big(|q_2-(1-p)|+|q_2-p|\Big)\nonumber\\
&=2\max\Big\{|q_1-\tfrac12|,\ |p-\tfrac12|\Big\}
  +2\max\Big\{|q_2-\tfrac12|,\ |p-\tfrac12|\Big\},\label{eq:TV-identity} \nonumber
\end{align}
where in the second line we used the identity $|u+v|+|u-v|=2\max\{|u|,|v|\}$. Therefore,
\begin{eqnarray}
    \label{eq:TV-2use}
&\mathrm{TV}&(P_{\mathsf M},P_{\mathsf N})=
\frac12\Big(\max\{|2p-1|,|2q_1-1|\} \nonumber\\
&+&\max\{|2p-1|,|2q_2-1|\}\Big).
\end{eqnarray}

Since $p,q_1,q_2$ are all of the form $\Tr[\sigma\,\mathsf M_1]$ for some density operator $\sigma$,
they lie in the interval $[\lambda_{\min}(\mathsf M_1),\lambda_{\max}(\mathsf M_1)]$, and hence
\begin{eqnarray}\label{eq:bstar-def}
&&|2p-1|,\ |2q_1-1|,\ |2q_2-1| \le b_\star,
\\
&&b_\star:=\max\{|2\lambda_{\max}(\mathsf M_1)-1|,\ |2\lambda_{\min}(\mathsf M_1)-1|\} \nonumber \\
&&=\|2\mathsf{M_1}-\mathbbm{1}\|_\infty. \nonumber
\end{eqnarray}
Combining \eqref{eq:pe-TV-form}--\eqref{eq:bstar-def} yields the universal lower bound
\begin{equation}\label{eq:pe-lowerbound}
p_e(p,q_1,q_2)\ge \frac12(1-b_\star)=\frac12\Big(1-\|2\mathsf M_1-\mathbbm{1}\|_\infty\Big),
\end{equation}
valid for every two-use adaptive strategy.
On the other hand, choosing an eigenstate $\rho=\ket{\psi_\star}\!\bra{\psi_\star}$ of $\mathsf M_1$
whose eigenvalue $\lambda_\star$ is farthest from $1/2$ gives $|2p-1|=b_\star$; repeating the same state
in the second use (i.e., $q_1=q_2=p$) then attains $\mathrm{TV}(P_{\mathsf M},P_{\mathsf N})=b_\star$
and saturates \eqref{eq:pe-lowerbound}. Hence the bound is tight and the expression
\begin{equation}
p_{e,\star}^{(2)}=\frac12\Big(1-\|2\mathsf M_1-\mathbbm{1}\|_\infty\Big)
=\frac12\Big(1-\|\mathsf M_1-\mathsf M_2\|_\infty\Big),
\end{equation}
is optimum for two uses.

Therefore, two adaptive uses do not improve the minimum-error probability for binary labeling:
the optimum equals the single-use value and is achieved by an eigenstate of $\mathsf{M_1}$
corresponding to the eigenvalue farthest from $1/2$.

Now let us to see what happens if we can uses the measurement devices three times. After three independent uses, the outcome string $(i,j,k)\in\{1,2\}^3$ is classical, and for equal priors, the error is
\begin{equation}\label{eq:pe-classical-3}
p_e^{(3)}(p)=\frac12\sum_{i,j,k\in\{1,2\}}
\min\{P_{\mathsf M}(i,j,k),P_{\mathsf N}(i,j,k)\}.
\end{equation}
We treat the cases $p\ge \tfrac12$ and $p\le \tfrac12$ separately because the labeling
depends on the sign of $p-\tfrac12$. Indeed, under $\mathsf{M}$ the probability of outcome $1$ is $p$,
while under $\mathsf N$ it is $1-p$. Hence if $p>\tfrac12$ then outcome $1$ is more likely under
$\mathsf M$ than under $\mathsf N$, so the labeling test decides $\mathsf{M}$ when the observed string
contains a majority of $1$'s. If $p<\tfrac12$ the situation is reversed and the decision rule
flips (equivalently, it decides $\mathsf{M}$ when the number of $1$'s is small). At $p=\tfrac12$ the
two hypotheses induce identical distributions, and no strategy can outperform random guessing.

First, assume that $p\ge \tfrac12$. Then the labeling coincides with the majority vote: decide $\mathsf{M}$ iff the string contains at least two symbols
``1'' (equivalently, iff the number of ones $k$ satisfies $k\ge 2$). Hence, the error conditioned on
$\mathsf{M}$ equals the probability that the majority vote fails, i.e.\ that at most one ``1'' occurs:
\begin{align}
p_e^{(3)}(p)
&=\Pr_{\mathsf M}(k\le 1)\nonumber\\
&=\binom{3}{0}(1-p)^3+\binom{3}{1}p(1-p)^2 \nonumber\\
&=(1-p)^3+3p(1-p)^2 \label{eq:pe-3use}\\
&=(1-p)^2(1+2p).\nonumber
\end{align}
For $p\le\tfrac12$ one analogously obtains $p_e^{(3)}(p)=p^2(3-2p)$, and therefore
\begin{equation}\label{eq:pe-3use-symmetric}
p_e^{(3)}(p)=\min\big\{(1-p)^2(1+2p),\; p^2(3-2p)\big\}.
\end{equation}
Optimising over the probe state $\rho$ amounts to optimising $p=\Tr[\rho\,\mathsf M_1]$ over the interval
$[\lambda_{\min}(\mathsf M_1),\lambda_{\max}(\mathsf M_1)]$. The minimum is attained at an eigenstate of
$\mathsf M_1$ whose eigenvalue $\lambda_\star$ is farthest from $1/2$, i.e.\ such that
$b_\star:=|2\lambda_\star-1|=\|2\mathsf M_1-\mathbbm{1}\|_\infty=\|\mathsf M_1-\mathsf M_2\|_\infty$.
Assuming $\lambda_\star\ge\tfrac12$ (otherwise swap $p\leftrightarrow 1-p$), the optimal three-use error is
\begin{equation}\label{eq:pe-3use-opt}
p_{e,\star}^{(3)}=(1-\lambda_\star)^2(1+2\lambda_\star)
=\frac{(1-b_\star)^2(2+b_\star)}{4}.
\end{equation}
Comparing with the two-shot case, one can show
\begin{equation}
    p_{e,\star}^{(2)} - p_{e,\star}^{(3)}=\frac{1}{4}b_{\star}(1-b_{\star}^2)\geq 0, 
\end{equation}
which means that, unless the two-shot case, the three-shot case can reduce the minimum error of labeling. Therefore, for the case of minimum-error labeling, we have the following theorem regarding the difference between even and odd numbers in multi-shot labeling of binary measurements. 
\begin{theorem} \label{thm:even-odd-nobin}
Consider a binary measurement $\{\mathsf M_1,\mathsf M_2\}$ and its swapped-labeling version
$\{\mathsf N_1,\mathsf N_2\}$ defined by $\mathsf N_1=\mathsf M_2$ and $\mathsf N_2=\mathsf M_1$.
Fix a probe state $\rho$ and set
\begin{equation}
p:=\Pr_{\mathsf M}(1|\rho)=\Tr[\rho\,\mathsf M_1],
\end{equation}
Repeat the same probe state $n$ times and let $k$ denote the number of outcomes labeled ``1''.
Then, under $\mathsf M$ and $\mathsf N$ respectively, one has
\begin{align}\label{eq:pm-pn-k}
&\Pr_{\mathsf M}(k=t)=\binom{n}{t}p^t(1-p)^{n-t},\\
&\Pr_{\mathsf N}(k=t)=\binom{n}{t}(1-p)^t p^{\,n-t}, \nonumber
\qquad t=0,1,\ldots,n.
\end{align}
For equal priors, the optimal decision rule is the MAP test \footnote{MAP denotes the maximum-a-posteriori rule: choose the hypothesis with the larger posterior probability,
equivalently, for equal priors, the one with the larger likelihood.)}, which is a majority test on $k$.
Assuming $p\ge \tfrac12$ (the case $p\le\tfrac12$ is symmetric), the corresponding minimum-error
probabilities are
\begin{align}
\label{eq:nobin-pe-odd}
p_e^{(2m-1)}(p)
&=\Pr_{\mathsf M}\!\big(k\le m-1\big)\\
&=\sum_{t=0}^{m-1}\binom{2m-1}{t}p^t(1-p)^{2m-1-t}, \nonumber
\end{align}
\begin{align}
\label{eq:nobin-pe-even}
&p_e^{(2m)}(p)
=\Pr_{\mathsf M}\!\big(k\le m-1\big)+\frac12\Pr_{\mathsf M}(k=m)\\
&=\sum_{t=0}^{m-1}\binom{2m}{t}p^t(1-p)^{2m-t}
+\frac12\binom{2m}{m}p^m(1-p)^m. \nonumber 
\end{align}
Moreover, for every $m\ge 1$ one has
\begin{equation}\label{eq:nobin-even-odd-eq}
p_e^{(2m)}(p)=p_e^{(2m-1)}(p),
\end{equation}
i.e.\ adding one extra use to pass from an odd number of uses to the next even number does not decrease
the optimal error. Consequently, improvements (if any) occur only when passing from an even number of uses
to an odd number of uses.
\end{theorem}

\begin{proof}
The expressions \eqref{eq:nobin-pe-odd}--\eqref{eq:nobin-pe-even} follow from the majority-test description
of the MAP rule together with \eqref{eq:pm-pn-k}.
To prove \eqref{eq:nobin-even-odd-eq}, use Pascal's identity
$\binom{2m}{t}=\binom{2m-1}{t}+\binom{2m-1}{t-1}$ and reindexing to obtain
\begin{align}
&\sum_{t=0}^{m-1}\binom{2m}{t}p^t(1-p)^{2m-t}\\
&=(1-p)\sum_{t=0}^{m-1}\binom{2m-1}{t}p^t(1-p)^{2m-1-t}\nonumber\\
&+p\sum_{t=0}^{m-2}\binom{2m-1}{t}p^t(1-p)^{2m-1-t}. \nonumber\label{eq:nobin-pascal-step}
\end{align}
Substituting \eqref{eq:nobin-pascal-step} into \eqref{eq:nobin-pe-even} and noting that
$\binom{2m}{m}=2\binom{2m-1}{m-1}$ yields
\begin{align}
p_e^{(2m)}(p)
&=(1-p)\sum_{t=0}^{m-1}\binom{2m-1}{t}p^t(1-p)^{2m-1-t}\nonumber\\
&+p\sum_{t=0}^{m-2}\binom{2m-1}{t}p^t(1-p)^{2m-1-t}\nonumber\\
&\quad +\binom{2m-1}{m-1}p^m(1-p)^m\nonumber\\
&=\sum_{t=0}^{m-2}\binom{2m-1}{t}p^t(1-p)^{2m-1-t}\nonumber\\
&+\binom{2m-1}{m-1}p^{m-1}(1-p)^m\big((1-p)+p\big)\nonumber\\
&=\sum_{t=0}^{m-1}\binom{2m-1}{t}p^t(1-p)^{2m-1-t}\nonumber \\
&=p_e^{(2m-1)}(p),
\end{align}
which completes the proof.
\end{proof}

\section{Non-binary observables} \label{sec:4non-binary}
\noindent  The importance of multi-shot labeling becomes particularly evident for non-binary measurements, for which complete labeling is impossible in a single shot, since each use yields only one labeled outcome. Consequently, for an observable with \( n \) effects, at least \( n-1 \) uses of the measurement device are required for complete labeling. In general, labeling a non-binary observable composed of distinct effects proves more intricate than the binary case. This complexity stems from the fact that labeling such an observable corresponds to a distinguishability task among up to \( n! \) possible measurement channels. In this section, we study the problem of labeling for non-binary measurements. We begin by considering perfect labeling and then address minimum-error labeling for those cases where perfect labeling is not possible.

\subsection{Perfect labeling}
\noindent As established in Ref.~\cite{sudarsanan2024single}, perfect labeling of binary measurements in a single shot is possible if and only if at least one of the effects is rank-deficient. Here, we extend this analysis to non-binary, finite-outcome measurements over a finite number \( m \) of device uses. We assume throughout that all measurement effects are distinct (i.e., \( M_i \neq M_j \) for all \( i \neq j \)), unless otherwise stated. We begin with the following theorem. 
 
 \begin{theorem}
\label{theorem:non-binary-perfect labeling}
                A $d$-dimensional observable $ \mathsf{M} $ with $n \in [2,d]$ different effects can be perfectly labeled, using the simple scheme, in $(n-1)$ shots if and only if there exists at least $(n-1)$ effects having at least one eigenvalue 1. 
                \label{nonbinary perfect}
\end{theorem}
\begin{proof}
 Perfect labeling for an effect $M_i$ means the existence of a state such that $M_i\ket{\varphi_i}\neq 0$ and $M_j\ket{\varphi_i}=0$ for all $j\neq i$ which result in $\sum_{j\neq i}M_j\ket{\varphi_i}=0$. Using $\sum_j M_j=\mathbbm{1}$, we have
\begin{align}
     \sum_jM_j &\ket{\varphi_i}=\ket{\varphi_i}\nonumber \\ &\rightarrow M_i\ket{\varphi_i} +\sum_{j\neq i}M_j\ket{\varphi_i}= M_i\ket{\varphi_i}=\ket{\varphi_i}, 
\end{align}
%\end{equation}
Then for $M_i$ to be labeled perfectly, $\ket{\varphi_i}$ has to be an eigenvector with eigenvalue $1$. For a non-binary measurement, to be perfectly labeled, we need $n-1$ effects to be labeled. Therefore at least $n-1$ of effects should have at least one eigenvalue $1$ in their support. The converse is straightforward, if an effect has an eigenvector $\ket{\psi_i}$ with eigenvalue 1, i.e. then because of the resolution of the identity ($\sum_iM_i=\mathbbm{1}$), $\ket{\psi_i}$ is in the kernel of all other effects ($M_{j\neq i}\ket{\psi_i}=0$). Then $\ket{\psi_i}$ can be used to perfectly label $M_i$ and if $n-1$ effects have eigenvalue $1$ in their support then the measurement can be fully labeled. 
\end{proof}
Moreover, it is worth mentioning that for a non-binary measurement with $n$ effects, if the measurement cannot be perfectly labeled with $n-1$ uses, then it can not be improved by increasing the number of uses. In other words, $n-1$ is the fixed number for the possibility of perfect labeling. 
\begin{corollary}
If a non-binary measurement can be perfectly labeled, then each arbitrary binarization of it (each binary measurement of the form $\{M_i,\sum_{j\neq i}M_j\}$) can be perfectly labeled in a single shot.
\end{corollary}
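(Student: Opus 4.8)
The plan is to reduce the statement to the single-shot characterization of perfectly labelable binary observables recalled from \cite{sudarsanan2024single}: a binary observable with effects $A$ and $B = \id - A$ admits perfect labeling in one shot precisely when at least one of $A$, $B$ is rank deficient. Hence, fixing a binarization $\{M_i, \sum_{j\neq i}M_j\}$ of the non-binary observable $\mathsf{M}$, I only need to exhibit that at least one of the two effects $M_i$ and $\sum_{j\neq i}M_j$ fails to be full rank.

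The first step is to invoke Theorem~\ref{nonbinary perfect} together with the remark following it: if $\mathsf{M}$ (with $n$ effects) can be perfectly labeled at all, then at least $n-1$ of its effects possess an eigenvector with eigenvalue $1$. I would then split on whether the distinguished index $i$ belongs to this set of $n-1$ effects. If it does, $M_i\ket{\varphi} = \ket{\varphi}$ for some $\ket{\varphi}$, and $\sum_j M_j = \id$ immediately gives $\big(\sum_{j\neq i}M_j\big)\ket{\varphi} = 0$, so the complementary effect $\sum_{j\neq i}M_j$ is rank deficient. If it does not, then $M_i$ is the unique possible exception, so some $M_k$ with $k\neq i$ has a unit eigenvector $\ket{\psi}$; by the resolution of identity $\ket{\psi}$ lies in the kernel of every other effect, in particular $M_i\ket{\psi} = 0$, so $M_i$ is rank deficient. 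In either case the cited binary result applies and yields perfect single-shot labeling of the binarization, which is the claim.

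I do not anticipate a substantive obstacle; the only points to verify are that the two cases are genuinely exhaustive — which holds because ``at least $n-1$ effects have a unit eigenvalue'' admits at most one exception — and that in the second case the existence of an index $k\neq i$ is guaranteed, which is automatic since a non-binary $\mathsf{M}$ has $n\geq 3$. A minor bookkeeping point is to observe that $\sum_{j\neq i}M_j$ is a nonzero positive operator (a sum of at least two nonzero effects), so $\{M_i,\sum_{j\neq i}M_j\}$ is a legitimate binary observable to which the characterization from \cite{sudarsanan2024single} applies.
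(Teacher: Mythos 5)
Your proposal is correct and follows essentially the same route as the paper's proof: it reduces the claim to the rank-deficiency criterion for single-shot binary labeling from \cite{sudarsanan2024single} and invokes Theorem~\ref{nonbinary perfect} to guarantee that at least $n-1$ effects have a unit eigenvalue. Your explicit case split (whether or not $M_i$ itself has a unit eigenvector) merely spells out the step the paper states tersely as ``in any binarization, at least one of the effects is rank deficient,'' so no substantive difference or gap remains.
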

\begin{proof}
From Theorem 1 in \cite{sudarsanan2024single}, each binary measurement can be perfectly labeled in one shot if and only if at least one of the effects is rank deficient. For a non-binary measurement with $n$ effects, there are $n$ different binarizations. According to Theorem \ref{nonbinary perfect}, at least $n-1$ effects should have at least one eigenvalue equal to $1$. Thus, in any binarization, at least one of the effects is rank deficient, and consequently, they are perfectly labeled in one shot.
\end{proof}
Utilizing Theorem \ref{nonbinary perfect}, we can characterize the measurements that can be perfectly labeled within a finite number of uses.
\begin{corollary}
    Any finite-shot perfectly labeled non-binary measurement with $n$ effects and on a $d$ dimensional space ($d \geq n$) can be parametrized as 
    \begin{eqnarray}
        M_i=\ketbra{e_i}{e_i}+A_i, \enspace \forall i \in [1, n-1] \enspace \text{and} \enspace M_n=A_n, 
    \end{eqnarray}
    where $\{\ket{e_i}\}_{i=1}^{n-1}$ is a an orthonormal basis in a $n-1$ dimensional subspace ($\mathcal{H}_{n-1}$) and $A_i$'s are semi positive operators such that $\sum_{i=1}^{n}A_i=\mathbbm{1}_{\mathcal{H}_d/ \mathcal{H}_{n-1}}$.
    \end{corollary}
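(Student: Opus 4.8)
The plan is to combine Theorem~\ref{nonbinary perfect} with the resolution of identity $\sum_i M_i = \mathbbm{1}$ to pin down the structure of each effect. Suppose $\mathsf{M}$ with $n$ effects $\{M_1,\dots,M_n\}$ on $\mathcal{H}_d$ can be perfectly labeled in a finite number of shots; by Theorem~\ref{nonbinary perfect} (and the subsequent remark that $n-1$ is the fixed number of shots) this forces at least $n-1$ of the effects -- relabel them as $M_1,\dots,M_{n-1}$ -- to each possess an eigenvector with eigenvalue $1$. Pick one such eigenvector $\ket{e_i}$ for each $i\in[1,n-1]$, so $M_i\ket{e_i}=\ket{e_i}$.

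The first key step is to show that the $\ket{e_i}$ can be taken orthonormal. From $\sum_j M_j = \mathbbm{1}$ and $M_i\ket{e_i}=\ket{e_i}$ we get $\sum_{j\neq i} M_j\ket{e_i}=0$, and since each $M_j\geq 0$ this means $M_j\ket{e_i}=0$ for every $j\neq i$ (a sum of positive operators annihilating a vector forces each summand to annihilate it). In particular, for $i\neq i'$ with $i,i'\in[1,n-1]$, we have $M_{i'}\ket{e_i}=0$ while $M_{i'}\ket{e_{i'}}=\ket{e_{i'}}$; taking the inner product, $\braket{e_{i'}}{e_i}=\bra{e_{i'}}M_{i'}\ket{e_i}=0$ (using self-adjointness of $M_{i'}$ and that it annihilates $\ket{e_i}$). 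Hence $\{\ket{e_i}\}_{i=1}^{n-1}$ is orthonormal; let $\mathcal{H}_{n-1}=\mathrm{span}\{\ket{e_i}\}$ and let $P=\sum_{i=1}^{n-1}\dyad{e_i}$ be the projector onto it.

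The second step is to extract the claimed parametrization. Define $A_i := M_i - \dyad{e_i}$ for $i\in[1,n-1]$ and $A_n := M_n$. I must check (i) $A_i\geq 0$, (ii) $A_i$ is supported on the complement $\mathcal{H}_d/\mathcal{H}_{n-1}$, and (iii) $\sum_{i=1}^n A_i = \mathbbm{1}_{\mathcal{H}_d/\mathcal{H}_{n-1}}$. For (ii): since $M_j\ket{e_i}=0$ for all $j\neq i$, and $M_i\ket{e_{i'}}=0$ for $i'\neq i$ while $(M_i-\dyad{e_i})\ket{e_i}=0$, every $A_k$ annihilates all of $\mathcal{H}_{n-1}$, so $PA_k = A_k P = 0$, i.e. $A_k$ lives on $\mathcal{H}_d/\mathcal{H}_{n-1}$. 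For (i): on $\mathcal{H}_{n-1}$, $A_i$ is zero; on $\mathcal{H}_d/\mathcal{H}_{n-1}$, $A_i$ agrees with $M_i$ (since $\dyad{e_i}$ vanishes there), which is positive -- so $A_i\geq 0$. For (iii): $\sum_{i=1}^n A_i = \sum_{i=1}^n M_i - \sum_{i=1}^{n-1}\dyad{e_i} = \mathbbm{1} - P = \mathbbm{1}_{\mathcal{H}_d/\mathcal{H}_{n-1}}$. The converse direction is immediate: any $\mathsf{M}$ of this form has $M_i\ket{e_i}=\ket{e_i}$ for $i\in[1,n-1]$ because $A_i\ket{e_i}=0$, so $n-1$ effects carry eigenvalue $1$ and Theorem~\ref{nonbinary perfect} applies.

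I do not expect a serious obstacle here; the only point needing a little care is the orthogonality argument -- making sure that ``$M_i$ has an eigenvector with eigenvalue $1$'' genuinely forces that eigenvector into the kernel of all the other effects (which follows from positivity, not just from the trace condition), and then that distinct such eigenvectors are mutually orthogonal. A secondary subtlety is the choice of which $n-1$ effects carry the eigenvalue $1$: if more than $n-1$ effects do, the parametrization still holds after relabeling, since $A_n=M_n$ may then also happen to have an eigenvalue-$1$ eigenvector, which is consistent with $\sum_i A_i=\mathbbm{1}_{\mathcal{H}_d/\mathcal{H}_{n-1}}$ only in the degenerate case $d>n$; this edge case should just be noted rather than belabored.
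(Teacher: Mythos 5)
Your proof is correct and follows the route the paper intends: the corollary is stated there without an explicit proof as a direct consequence of Theorem~\ref{nonbinary perfect}, whose own proof already contains your key step that an eigenvalue-$1$ eigenvector of $M_i$ lies in the kernel of every other effect (via positivity and $\sum_j M_j=\mathbbm{1}$), and your orthogonality argument together with the definition $A_i=M_i-\dyad{e_i}$ supplies exactly the details left implicit. The only blemish is the closing aside: if more than $n-1$ effects carry eigenvalue $1$, the parametrization remains consistent even for $d=n$ (it simply forces the projective case $M_i=\dyad{e_i}$), so the claim that this is possible only for $d>n$ is inaccurate, though it plays no role in the main argument.
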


\subsection{Minimum-error labeling}
In this section, we investigate the minimum-error labeling of non-binary measurements for cases where perfect labeling is not possible.
Our approach to non-binary observables begins by examining the minimum-error labeling scenario without entanglement assistance, and then we assess whether the more complex entanglement-assisted scenario can improve the performance.

Before proceeding, we note that fully labeling a non-binary measurement is typically a multi-shot task.
Labeling an $n$-outcome measurement can be formulated as identifying an unknown permutation of the outcome
labels, i.e., discriminating among $n!$ possible measure-and-prepare channels corresponding to all relabelings.
Each use of the device produces a single classical outcome and therefore provides (in general, probabilistic)
information about the underlying permutation; after each observation one may update the set of plausible
permutations (equivalently, update a posterior distribution over permutations).

In the step-by-step protocol considered below, we aim to fix the labels sequentially: in step $x$ we
choose an input state designed to maximise the probability of a particular effect $M_x$, and we declare
that the observed device label corresponds to $\mathsf{M_x}$. Conditional on the event that step $x$ is correct,
the number of permutations consistent with the information gathered so far decreases from $n!$ to $(n-1)!$,
then to $(n-2)!$, and so on. After $m$ successful steps, there remain $(n-m)!$ permutations compatible with the
identified correspondences, and after $n-1$ successful steps, the labeling is uniquely determined.

Accordingly, the probability of successful full labeling for a sequential protocol is computed via the
chain rule as
\begin{equation}
P(\text{full success})=\prod_{x=1}^{n-1} P(S_x\,|\,S_1,\cdots, S_{x-1}),
\label{eq:fullsuccess}
\end{equation}
where $S_x$ denotes the event that step $x$ identifies the correct correspondence

In the following, we study the minimum-error labeling procedure step by step. We should note that if the measurement effects include projectors (i.e., effects with eigenvectors corresponding to eigenvalue one), their corresponding outcomes can be assigned perfectly. This is achieved simply by preparing the measurement with an input state that is an eigenvector of the projector. As a result, the labeling task of a measurement with $m$ effects with eigenvalue 1 can be recast to discrimination among $(n-m)!$ prepare-and-measure channels.

The labeling process begins with the first outcome, following the simple parallel scheme illustrated in Fig. \ref{fig:testers}. To compute the minimum-error probability within this scheme, we apply a lemma established in Ref. \cite{ghoreishi2021minimum}. This lemma furnishes the minimum-error probability for distinguishing a collection of pairwise commuting states.
		\begin{lemma} [Theorem 1 in \cite{ghoreishi2021minimum}]
			Let $\{ \varrho_1,\dots, \varrho_n\}$ be a set of pairwise commuting states appearing with respective \textit{a priori} probabilities $\{ p_1, \dots, p_n\}$. Hence, we have the spectral decomposition $\varrho_j = \sum_k \lambda_{jk} \Pi_k$ for all $j$ with $\lambda_{jk} = {\rm Tr}[\varrho_j \Pi_k]$. Then, the minimum-error probability $p_\text{e}$ is given by $p_\text{e}=1-p_\text{s}$ where
	\begin{equation}
		p_{\text{s}} = \sum_k {\rm Tr}\{\Pi_k\} \max\{ {\rm Tr}\{p_1 \varrho_1 \Pi_k\},\dots,{\rm Tr}\{p_n \varrho_n \Pi_k\}\}.\label{lemma}
	\end{equation}
 Notice that the above minimum-error discrimination success probability is always achievable by the measurements composed of the projectors $\{\Pi_k\}_k$.
\end{lemma} 
\noindent Using this lemma, we arrive at the following theorem: 
	\begin{theorem} 
		A non-binary observable $\G $,  composed of  $n$ effects $ \{M_1, \dots, M_n \}$, can be partially labeled with the minimum-error probability  being $p_{\text{e}} = 1- \frac{1}{(n-1)!} \alpha_{\G}$, without the assistance of entanglement. %where $\alpha_{\G}:= \max_{\{\varrho, M_j\}} {\rm Tr}\{\varrho M_j\}$ and $\varrho$ being the state of the system that has been measured.
	\end{theorem}
 \begin{proof}
		 We first consider the case without entanglement assistance, and then prove that entanglement does not give any advantage. For the case of without entanglement, the labeling task translates to the simultaneous discrimination of the $n!$ prepared states $\varrho_j := \mathcal{P}_{\pi_{j}}(\varrho)  = \sum_m \tra{\G_{j}(m) \varrho} \dyad{m}$, when the measured state is $\varrho$. We note that this is a collection of mutually commuting states. With respect to the above lemma, we identify $\{\Pi_m = \dyad{m}\}_{m=1}^n$\}. For our case in which each observable appears with equal chance, $p_j = \frac{1}{n!}$, Eq (\ref{lemma}) becomes
		\begin{equation}
			\begin{aligned}
				p_\text{s} =& \frac{1}{n!} \{ \max\{\tra{\varrho M_1},\tra{\varrho M_2}, \dots, \tra{\varrho M_n}\} + \\
				&\max\{\tra{\varrho M_2},\tra{\varrho M_3}, \dots, \tra{\varrho M_1}\} + \dots+ \\
				& \max\{\tra{\varrho M_n},\tra{\varrho M_1}, \dots, \tra{\varrho M_{n-1}}\} \}\\
				=& \frac{1}{n!} n \cdot \max\{\tra{\varrho M_1},\tra{\varrho M_2}, \dots, \tra{\varrho M_n}\} \\
                =& \frac{1}{(n-1)!}  \max_{M_j}\{\tra{\varrho M_j} \}.  
			\end{aligned}
		\end{equation}
     Now, optimising over the possible states $\varrho$, and denoting $\alpha_{\G} := \max_{\{\varrho, M_j\}} \text{Tr}\{\varrho M_j\} = \max_j \norm{M_j}_{\infty}$, we arrive at the optimal success probability as the following,
     \begin{equation}
         p_s^{\text{opt}} = \frac{1}{(n-1)!}  \alpha_{\G}.\label{no-ent}
     \end{equation}

Now we show that entanglemt does not provide any improvement over the simplest case. To see this, we prove for a three-effect observable, $\M = \{ M_1, M_2, M_3\}$. Assuming there are no identical effects, there are six non-identical observables and their corresponding Choi-Jamio\l kowski operators $\Phi_{k}$ which we need to discriminate in order to label. As such, the quantum testers that testing these observables are given by $\{ T_c =  \sum_{i} H_i^{(c)} \otimes \dyad{i}  \}$ with $c \in \{1, \dots, 6 \}$ and $\sum_c H_j^{(c)} = \varrho $ for all $j$. Then, the average probability of success $p_s$ is given by $p_s = \sum_j p_j \text{Tr}\{T_j \Phi_j\}$. Plugging in the expressions for ${T}_j$ and $\Phi_j$ and using the notation $H_x^{\alpha\beta} = H_x^{(\alpha)}+H_x^{(\beta)}$, we have,
\begin{eqnarray}
    p_s &=&  \frac{1}{3!} \tra{M_1 H_1^{12} +  M_2 H_1^{34} + M_3 H_1^{56}} \nonumber \\
     & & + \frac{1}{3!} \tra{M_1 H_2^{35} + M_2 H_2^{16} + M_3 H_2^{24}} \nonumber \\
     & & + \frac{1}{3!} \tra{M_1 H_3^{46}+ M_2 H_3^{25} + M_3 H_3^{13}} \nonumber  .
 \end{eqnarray}
Consider the term appearing in the first summand as $ \Lambda = \tra{M_1 H_1^{12} +  M_2 H_1^{34} +  M_3 H_1^{56}}$. Knowing that $H_1^{12} + H_1^{34} + H_1^{56} = \varrho,$ without loss of generality, we can assume $H_1^{12} = w_1 \sigma_1,  H_1^{34} = w_2 \sigma_2, \text{and }  H_1^{56} = w_3 \sigma_3$,  where $w_i$'s are probabilities adding up to one and $\sigma_i$'s are some density operators.  Now we have, 
\begin{align}
    \Lambda &= \sum_{i=1}^3 w_i \tra{M_i \sigma_i} 
     \le \sum_{i=1}^3 w_i ||M_i||_{\infty} \\
     &\le \max_i ||M_i||_{\infty} = \alpha. \nonumber
\end{align}
Thus, we find that the first summand $\frac{1}{3!}\Lambda$ is upper bounded by $ \frac{1}{3!}\alpha$. Similar analyses can be carried out for the second and third summand, leading to similar conclusions that both of them are separately upper bounded by $ \frac{1}{3!}\alpha$. Thus, we have the following inequality
\begin{equation}
    p_s \le \frac{1}{3!}(\alpha + \alpha + \alpha) = \frac{1}{3!} 3 \alpha = \frac{1}{2!} \alpha.
\end{equation}
Comparing this inequality and Eq\eqref{no-ent}, with $n=3$, we can conclude that the average success probability furnished by Eq\eqref{no-ent} is fact the optimal one over all possible strategies, including entanglement-assisted ones. This analysis can be extended to non-binary observables composed of finite $n$ effects, arriving at the conclusion that $p_s = \frac{1}{(n-1)!} \alpha_{\M}$ is the optimal average success probability. 
\end{proof}    

In brief, for a non-binary observable $\mathsf{M}$, composed of $n$ effects  $ \{M_1, \dots, M_n \}$, the optimal average success probability obtained over no-entanglement strategies, $p_{\text{s}} = \frac{1}{(n-1)!} \alpha_{\mathsf{M}}$, is optimal over all strategies. This implies that a more general entanglement-assisted strategy cannot improve it.

A protocol for achieving the optimal success probability, in Eq(\ref{no-ent}), is the following. We note that $\alpha_{\mathsf{M}}$ is the largest eigenvalue among the eigenvalues of all of the involved effects. Let $M$ be the effect having this eigenvalue. That is, there exists a state vector $\ket{\varphi}$ such that $\alpha_{\mathsf{M}} = \braket{\varphi}{M| \varphi}$. Now, the unlabeled measurement device is made to measure the state $\ket{\varphi}$ and an outcome label is registered. The outcome label associated with the effect $M$ occurs with  probability  $\alpha_{\mathsf{M}}$. Once we associate the label to  $M$, the $n!$ possibilities of observables reduce to $(n-1)!$. Now, the only possible choice is to select one of the $(n-1)!$ equally probable observables and announce it as the decision. This is done with the probability $\frac{1}{(n-1)!} \alpha_{\mathsf{M}}$, which coincides with Eq(\ref{no-ent}). Discussion of this protocol also showcases the importance of having access to multiple shots. This is due to that fact that  after use of the first (and only) shot here, we are bound to ``classically" choose one of the possible observables.   
 
Given an observable $\mathsf{M}$ with $n$ effects, let us ``re-index" and order the effects as $M_1, M_2, \dots,M_n$ according to the largest eigenvalue of each individual effect. That is, $\lambda_1 \ge \lambda_2 \ge \dots\ge \lambda_n $, where $\lambda_j$ is the largest eigenvalue of $M_j$. Now, we have the following proposition, which gives a lower bound for labeling of $m$ outcomes 
\begin{prop}\label{prop:lowerbound-simple}
Let $\{\mathsf M_j\}_{j=1}^n$ be an $n$-outcome POVM and assume the device implements an unknown
re-labeling (permutation) of the outcomes, drawn uniformly from $S_n$.
Order the effects such that $\|\mathsf M_1\|_\infty \ge \|\mathsf M_2\|_\infty \ge \cdots \ge \|\mathsf M_n\|_\infty$.
Then, for any $m\in\{1,\ldots,n-1\}$, the optimal average success probability for identifying the correct
permutation using $m$ uses satisfies the lower bound
\begin{equation}\label{eq:lowerbound-m}
p^{(m)}_{s,\mathrm{opt}} \;\ge\; \frac{1}{(n-m)!}\prod_{x=1}^{m}\|\mathsf M_x\|_\infty .
\end{equation}
\end{prop}

\begin{proof}
Consider the following sequential protocol.
At step $x=1,\ldots,m$, choose a probe state $\rho_x$ that attains
$\Tr[\rho_x \mathsf M_x]=\|\mathsf M_x\|_\infty$ (e.g.\ an eigenstate of $\mathsf M_x$ with largest eigenvalue),
use the device once, and declare that the observed device label corresponds to $\mathsf M_x$.
If at any step the observed device label coincides with a label already assigned in a previous step, the protocol
aborts and is counted as failure. After $m$ successful steps, there remain $(n-m)!$ permutations consistent with the
assigned correspondences; guessing uniformly among them yields the additional factor $1/(n-m)!$.
Consequently, the protocol succeeds with probability in Eq.\eqref{eq:lowerbound-m}, proving the bound.
\end{proof}

Let us briefly discuss the simplest non-binary setting, namely a measurement with three effects
$\{M_1,M_2,M_3\}$, and clarify the role of entanglement in multi-shot labeling. We model the unknown labeling of the measurement outcomes by a permutation
$\pi\in S_3$ acting on the true effect indices. If the physical effect is $M_j$, the device
reports the classical label $\pi(j)$. Hence, upon observing a device label $a$, the
corresponding physical effect is $M_{\pi^{-1}(a)}$, which explains the appearance of the
inverse permutation in the probabilities later in this section.

Since, the associated measurement channel is \textit{entanglement-breaking};therefore,
for any input (possibly entangled with an ancilla), the output state is classical-quantum and carries no
entanglement with any reference system. Consequently, allowing entangled probes across uses does not provide
a genuinely quantum advantage; any $m$-use entanglement-assisted tester can be reduced, without loss of
generality, to a \textit{classical adaptive protocol} in which one chooses an input state at each use based
only on the previously observed classical outcomes, followed by a classical decision rule.

For two uses (which is the first nontrivial multi-shot instance for $n=3$), a general classical-adaptive
protocol is as follows: choose an initial state $\rho^{(1)}$, observe the first device label
$a\in\{1,2,3\}$, then choose a second state $\rho^{(2)}_{a}$ depending on $a$, observe the second label
$b\in\{1,2,3\}$, and finally output an estimate $\hat\pi(a,b)\in S_3$ of the permutation.
Under hypothesis $\pi$, the single-use probability of observing label $a$ on input $\rho$ is
\[
P_{\pi}(a|\rho)=\Tr\!\big[\rho\,M_{\pi^{-1}(a)}\big],
\]
hence the induced joint distribution for the two observed labels is
\[
P_{\pi}(a,b)
=\Tr\!\big[\rho^{(1)} M_{\pi^{-1}(a)}\big]\;
 \Tr\!\big[\rho^{(2)}_{a} M_{\pi^{-1}(b)}\big].
\]
Assuming a uniform prior over $\pi\in S_3$, the optimal decoding rule is the MAP
rule, which here coincides with maximum likelihood. Therefore, the globally optimal two-use success
probability (optimised over all adaptive strategies, and hence also over all entanglement-assisted
testers) can be written as
\begin{equation}\label{eq:opt-3outcome-2use}
p_{s,\mathrm{opt}}^{(2)}
=
\max_{\rho^{(1)},\,\{\rho^{(2)}_{a}\}}
\frac{1}{6}\sum_{a=1}^3\sum_{b=1}^3
\max_{\pi\in S_3}\;
\Tr\!\big[\rho^{(1)} M_{\pi^{-1}(a)}\big]\;
\Tr\!\big[\rho^{(2)}_{a} M_{\pi^{-1}(b)}\big].
\end{equation}
Equation~\eqref{eq:opt-3outcome-2use} shows explicitly that the multi-shot optimisation reduces to choosing
(at most) one probe state per history of observed labels, followed by classical post-processing; it also
makes clear that based on Proposition~\ref{prop:lowerbound-simple}, simple scheme provides an achievable success probability for this
problem which does not by itself establish global optimality for $m>1$.
Moreover, the optimisation in \eqref{eq:opt-3outcome-2use} can be evaluated numerically (for fixed POVM
$\{M_1,M_2,M_3\}$) by formulating it as a small semidefinite program over the density operators
$\rho^{(1)}$ and $\{\rho^{(2)}_{a}\}_{a=1}^3$, together with auxiliary variables that linearise the inner
$\max_{\pi\in S_3}$.

\subsection{Case study: Trine measurement}
Consider the standard qubit trine POVM $\{M_1,M_2,M_3\}$ with
$M_k=\tfrac{2}{3}\ket{\psi_k}\!\bra{\psi_k}$, where the $\ket{\psi_k}$ are equatorial states separated by
$2\pi/3$.i.e. $\ket{\psi_0}=\ket{0}$ and $\ket{\psi_{1}}=\frac{1}{2}\ket{0}+ \frac{\sqrt{3}}{2}\ket{1}$,and $\ket{\psi_{2}}=\frac{1}{2}\ket{0}- \frac{\sqrt{3}}{2}\ket{1}$ .
The labeling process for this measurement works as follows. 
Fix the first probe state $\rho^{(1)}$ and define
\[
p_i:=\Tr[\rho^{(1)}M_i],\qquad i=1,2,3,
\]
so that $p_1+p_2+p_3=1$. The unknown labeling is a permutation $\pi\in S_3$, assumed uniform,
$\Pr(\pi)=1/6$. If the device outputs the classical label $A=a\in\{1,2,3\}$, then the underlying
true effect that has occurred is the index $I=\pi^{-1}(A)$. Under hypothesis $\pi$, the likelihood
of observing $A=a$ is $\Pr(A=a\mid \pi)=p_{\pi^{-1}(a)}$.
For any $i\in\{1,2,3\}$, the posterior distribution of $I$ conditioned on observing $A=a$ can be computed as 
\begin{eqnarray}
\label{eq:posterior-bayes}
\Pr(I=i\mid A=a)=\Pr(\pi^{-1}(a)=i\mid A=a)\\
=\frac{\Pr(A=a\mid \pi^{-1}(a)=i)\,\Pr(\pi^{-1}(a)=i)}{\Pr(A=a)}, \nonumber
\end{eqnarray}

First, if $\pi^{-1}(a)=i$ (equivalently $\pi(i)=a$), then for every such permutation $\pi$ we have
$\Pr(A=a\mid \pi)=p_{\pi^{-1}(a)}=p_i$, hence
\begin{equation}\label{eq:likelihood-cond}
\Pr(A=a\mid \pi^{-1}(a)=i)=p_i.
\end{equation}
Second, under the uniform prior on $\pi$, the event $\pi^{-1}(a)=i$ holds for exactly $(3-1)!=2$
permutations (fix $\pi(i)=a$ and permute the remaining two indices), hence
\begin{equation}\label{eq:prior-cond}
\Pr(\pi^{-1}(a)=i)=\frac{2}{6}=\frac{1}{3}.
\end{equation}
Finally, by the law of total probability and \eqref{eq:likelihood-cond}--\eqref{eq:prior-cond},
\begin{eqnarray}\label{eq:marginal-a}
\Pr(A=a)&=&\sum_{i=1}^3 \Pr(A=a\mid \pi^{-1}(a)=i)\Pr(\pi^{-1}(a)=i) \nonumber\\
&=&\sum_{i=1}^3 p_i\cdot\frac{1}{3}
=\frac{1}{3}\sum_{i=1}^3 p_i
=\frac{1}{3}.
\end{eqnarray}
Substituting \eqref{eq:likelihood-cond}, \eqref{eq:prior-cond}, and \eqref{eq:marginal-a} into
\eqref{eq:posterior-bayes} yields
\begin{equation}\label{eq:posterior-result}
\Pr(I=i\mid A=a)=\Pr(\pi^{-1}(a)=i\mid A=a)=p_i.
\end{equation}
In particular, the posterior distribution of the hidden index $I=\pi^{-1}(a)$ conditioned on observing
$A=a$ is \textit{independent of $a$}: it is always the same vector $(p_1,p_2,p_3)$ (up to the trivial
relabeling of the device symbols).

As a result of this discussion, we have the following lemma:

\begin{lemma}\label{lem:trine-no-adapt}
For the problem of labeling a trine measurement in two uses, any adaptive strategy (second probe depending on the first observed device label $a$) can be
replaced, without loss in average success probability, by a non-adaptive one with a fixed second probe
state $\rho^{(2)}$ (independent of $a$).
\end{lemma}

\noindent Consequently, for two-use optimal success probability for trine labeling, we have the following theorem
\begin{theorem}\label{thm:trine-2use-opt}
 The optimal average success probability for the labeling of the trine measurement using two uses of the measurement (equal priors) is
\begin{equation}\label{eq:trine-2use-opt}
p_{s,\mathrm{opt}}^{(2)}=\frac{3+\sqrt{3}}{9}\approx 0.52578.
\end{equation}
\end{theorem}

\begin{proof}
By Lemma~\ref{lem:trine-no-adapt}, it suffices to consider two fixed probe states $\rho^{(1)}$ and
$\rho^{(2)}$. Let
\begin{equation}
p_i:=\Tr[\rho^{(1)}M_i],\qquad q_i:=\Tr[\rho^{(2)}M_i],\qquad i=1,2,3.
\end{equation}
Under permutation $\pi$, the joint probability of observing device labels $(a,b)\in\{1,2,3\}^2$ is
\begin{equation}
P_\pi(a,b)=p_{\pi^{-1}(a)}\,q_{\pi^{-1}(b)}.
\end{equation}
With uniform prior over $\pi$, the MAP success probability is
\begin{equation}\label{eq:ps-map}
p_s^{(2)}=\frac{1}{6}\sum_{a,b\in\{1,2,3\}} \max_{\pi\in S_3} P_\pi(a,b).
\end{equation}
For $a=b$, one must assign the same hidden index to both labels, hence
$\max_{\pi}P_\pi(a,a)=\max_i p_iq_i$. For $a\neq b$, one must assign two distinct indices, hence
$\max_{\pi}P_\pi(a,b)=\max_{i\neq j} p_iq_j$. Since there are $3$ diagonal pairs and $6$ off-diagonal
pairs, \eqref{eq:ps-map} becomes
\begin{equation}\label{eq:ps-3out}
p_s^{(2)}=\frac12\,\max_i(p_iq_i)\;+\;\max_{i\neq j}(p_iq_j).
\end{equation}

By rotational symmetry we may take $\rho^{(1)}=\ket{\psi_1}\!\bra{\psi_1}$, yielding
\begin{equation}
(p_1,p_2,p_3)=\left(\frac{2}{3},\,\frac{1}{6},\,\frac{1}{6}\right).
\end{equation}
Restrict to equatorial pure states for $\rho^{(2)}$ (optimality follows from convexity), and let $\delta$
be its azimuthal angle relative to $\ket{\psi_1}$, i.e. $\ket{\varphi}=\frac{1}{\sqrt{2}}\ket{0}+\frac{1}{\sqrt{2}}e^{i\delta}\ket{1}$. Then
\begin{align}
&q_1(\delta)=\frac{1+\cos\delta}{3},\nonumber\\
&q_2(\delta)=\frac{1+\cos(\delta-\tfrac{2\pi}{3})}{3},\nonumber\\
&q_3(\delta)=\frac{1+\cos(\delta-\tfrac{4\pi}{3})}{3}. \nonumber 
\end{align}
For $\delta\in[0,\frac{2\pi}{3}]$, the maximum in \eqref{eq:ps-3out} are attained by
$\max_i(p_iq_i)=\tfrac{2}{3}q_1(\delta)$ and
$\max_{i\neq j}(p_iq_j)=\tfrac{2}{3}\max\{q_2(\delta),q_3(\delta)\}=\tfrac{2}{3}q_2(\delta)$,
hence
\begin{equation}
p_s^{(2)}(\delta)
=\frac{1}{3}q_1(\delta)+\frac{2}{3}q_2(\delta).
\end{equation}
Using $\cos(\delta-\tfrac{2\pi}{3})=-\tfrac12\cos\delta+\tfrac{\sqrt3}{2}\sin\delta$ gives
\begin{equation}
p_s^{(2)}(\delta)=\frac{1}{3}+\frac{\sqrt3}{9}\sin\delta,
\end{equation}
which is maximised at $\delta=\tfrac{\pi}{2}$ with value \eqref{eq:trine-2use-opt}. For $\delta\in[2\pi/3,\pi]$ one finds $\max_i(p_iq_i)=\tfrac16 q_2(\delta)$ and thus $p_s^{(2)}(\delta)\le \tfrac12<\tfrac{3+\sqrt3}{9}$, so the optimum occurs for $\delta\in[0,2\pi/3]$.

\end{proof}
An optimal protocol is then obtained by choosing the first probe to be one of the trine states $\ket{\psi_1}$
and the second probe to be an equatorial state rotated by $\pi/2$ with respect to $\ket{\psi_1}$.  

For comparison, consider the simple two-shot protocol underlying the lower bound in
Proposition~\ref{prop:lowerbound-simple}: in the first use we probe with $\dyad{\psi_1}$ and in the
second use with $\dyad{\psi_2}$. Since for the trine POVM each effect satisfies
$\|M_k\|_\infty=\tfrac{2}{3}$, this strategy achieves
\[
p_s^{(2)}=\frac{2}{3}\cdot\frac{2}{3}=\frac{4}{9}\approx 0.44,
\]
which is strictly smaller than the optimal value $p_{s,\mathrm{opt}}^{(2)}\approx 0.52578$ established in
Theorem~\ref{thm:trine-2use-opt}.

%%%%%%%%%%%%%%%%%%%%%%%%%%%%%%%%%%%%%%%%
\section{conclusion} \label{sec:conclusion}
In this work, we extended the single-shot quantum labeling problem of Ref.~\cite{sudarsanan2024single} to the multiple-shot
regime and investigated how access to a finite many uses of an unlabeled measurement device affects the
possibility and performance of labeling. For binary observables, we showed that if perfect labeling is
impossible in a single shot (equivalently, both effects are full rank), then it remains impossible with any
finite number of shots. We then characterised the minimum-error performance in the swapped-labeling
scenario and found an even--odd effect: adding one use to go from an odd number of shots to the next even
number does not improve the optimal error, whereas improvements occur when passing from an even number
of shots to an odd number.

For non-binary observables, we provided a finite-shot characterisation of perfect labellability via a simple
scheme: a $d$-dimensional observable with $n$ outcomes can be perfectly labelled in $n-1$ shots if and only if
at least $n-1$ effects have an eigenvalue equal to $1$. We also studied minimum-error labeling in the
single-shot setting, obtaining a closed-form expression for the optimal success probability and showing
that entanglement assistance does not improve this optimum. Finally, we discussed multi-shot schemes for
partial identification of labels and illustrated the results on concrete examples, including the
qubit trine POVM, where the optimal two-shot success probability can be evaluated explicitly.

%%%%%%%%%%%%%%%%%%%%%%%%%%%%%%%%%%%%%%%%

\section*{Acknowledgements}
We thank Anna Jenčová for the fruitful discussion and for commenting on the first version of the results.
SAG acknowledges the Štefan Schwarz Support Fund and the project VEGA 2/0164/25 (QUAS). NSR acknowledges the support of projects APVV-22-0570 (DEQHOST) and the financial support from the US Department of Energy, Office of Science, Advanced Scientific Computing Research program, under award number DE-SC0025430. SS acknowledges funding through PASIFIC program call 2 (Agreement No. PAN.BFB.S.BDN.460.022 with the Polish Academy of Sciences). This project has received funding from the European Union’s Horizon 2020 research and innovation programme under the Marie Skłodowska-Curie grant agreement No 847639 and from the Ministry of Education and Science of Poland. MZ acknowledges support from the QENTAPP 09103-03-V04-00777 project.

\bibliography{labeling}
\end{document}